\documentclass{article}
\usepackage{amsmath,amssymb,amsthm,verbatim}
\usepackage[margin=1in]{geometry}

\title{A Pseudorandom Generator for Polynomial Threshold Functions of Gaussian with Subpolynomial Seed Length}
\author{Daniel M. Kane}

\newcommand{\R}{\mathbb{R}}

\newcommand{\jac}{\textrm{Jac}}

\newcommand{\pr}{\textrm{Pr}}
\newcommand{\sgn}{\textrm{sgn}}
\newcommand{\E}{\mathbb{E}}
\newcommand{\Z}{\mathbb{Z}}

\newtheorem{thm}{Theorem}
\newtheorem{prop}[thm]{Proposition}
\newtheorem{cor}[thm]{Corollary}
\newtheorem{lem}[thm]{Lemma}

\newtheorem*{defn}{Definition}

\begin{document}
\maketitle

\begin{abstract}

\end{abstract}

\section{Introduction}

We say that a function $f:\R^n\rightarrow\{+1,-1\}$ is a degree-$d$ \emph{polynomial threshold function} (PTF) if it is of the form $f(x)=\sgn(p(x))$ for $p$ some (degree-$d$) polynomial in $n$ variables.  Polynomial threshold functions make up a natural class of Boolean functions and have applications to a number of fields of computer science such as circuit complexity \cite{curciutApp}, communication complexity \cite{commApp} and learning theory \cite{learningApp}.

In this paper we study the question of pseudorandom generators for polynomial threshold functions of Gaussians.  In particular, we wish to find explicit functions $F:\{0,1\}^s\rightarrow \R^n$ so that for any degree-$d$ polynomial threshold function $f$
$$
\left| \E_{x\sim_u \{0,1\}^s}[ f(F(x))] - \E_{X\sim \mathcal{G}^n}[f(X)]\right| < \epsilon.
$$
We say that such an $F$ is a pseudorandom generator of seed length $s$ that fools degree-$d$ polynomial threshold functions with respect to the Gaussian distribution to within $\epsilon$.  In this paper, we develop a new such generator whose seed length is $O(\epsilon^{-o(1)})$ for any fixed $d,n$.

\subsection{Previous Work}

There have been a number of previous papers dealing with the question of finding pseudorandom generators for polynomial threshold functions with respect the the Gaussian distribution or the Bernoulli distribution (i.e. uniform over $\{-1,1\}^n$).  Several early works in this area showed that polynomial threshold functions of various degrees could be fooled by arbitrary $k$-wise independent families of Gaussian or Bernoulli random variables.  It should be noted that a $k$-wise independent family of Bernoulli random variables can be generated from a seed of length $O(k\log(n))$.  Although, any $k$-wise independent family of Gaussians will necessarily have infinite entropy, it is not hard to show that a simple discretization of these random variables leads to a generator of comparable seed length.  These results on fooling polynomial threshold functions with $k$-independence are summarized in Table \ref{kIndepTable} below.
\begin{table}[h]\label{kIndepTable}
\begin{tabular}{|l|c|c|l|}
\hline
Paper & Bernoulli/Gaussian & d & k \\
\hline
Diakonikolas, Gopalan, Jaiswal, Servedio, Viola \cite{IndepHalf} & Bernoulli & 1 & $O(\epsilon^{-2}\log^2(\epsilon^{-1}))$ \\
Diakonikolas, Kane, Nelson \cite{IndepDeg2} & Gaussian & 1 & $O(\epsilon^{-2})$ \\
Diakonikolas, Kane, Nelson \cite{IndepDeg2} & Both & 2 & $O(\epsilon^{-8})$\footnotemark\\
Kane \cite{kIndep} & Both & $d$ & $O_d\left( \epsilon^{-2^{O(d)}}\right)$ \\
\hline
\end{tabular}
\end{table}
\footnotetext{The bound in \cite{IndepDeg2} for the Bernoulli case is actually $\tilde O(\epsilon^{-9})$, but this can be easily improved to $O(\epsilon^{-8})$ using technology from \cite{DD}.}
Unfortunately, it is not hard to exhibit $k$-wise independent families of Bernoulli or Gaussian random variables that fail to $\epsilon$-fool the class of degree-$d$ polynomial threshold functions for $k=\Omega(d^2 \epsilon^{-2})$, putting a limit on what can be obtained through mere $k$-independence.

There have also been a number of attempts to produce pseudorandom generators by using more structure than limited independence.  In \cite{MZ}, Meka and Zuckerman develop a couple of such generators in the Bernoulli case.  Firstly, they make use of pseudorandom generators against space bounded computation to produce a generator of seed length $O(\log(n) +\log^2(\epsilon^{-1}))$ in the special case where $d=1$.  By piecing together several $k$-wise independent families, they produce a generator for arbitrary degree PTFs of seed length $2^{O(d)}\log(n) \epsilon^{-8d-3}$.  In \cite{DD}, the author develops an improved analysis of this generator allowing for a seed length as small as $O_{c,d}(\log(n)\epsilon^{-11-c})$.

For the Gaussian case, the author developed a generator of seed length $2^{O_c(d)}\log(n) \epsilon^{-4-c}$ in \cite{GPRG}.  This generator was given essentially as an average several random variables each picked independently from a $k$-wise independent family of Gaussians.  The analysis of this generator was also improved in \cite{DD}, obtaining a seed length of $O_{c,d}(\log(n) \epsilon^{-2-c})$.  In this paper, we improve on this bound further.  We make use of a slight modification of the above generator, by using unequal weights in our averaging process and obtain a seed length of $O_{c,d}(\log(n) \epsilon^{-c})$.

\subsection{Outline of Paper}

In Section \ref{BackgroundSec}, we will introduce some conventions that we will use throughout the paper, and review some basic results on polynomials of Gaussians.

The key idea in our analysis is that for $p$ an approximately linear polynomial that $\E[p(X)]$ is a smooth function in the coefficients of $p$, and thus can be well approximated by a polynomial in these coefficients.  A precise statement of this idea is presented in Proposition \ref{PolyApproxProp}, whose proof takes up most of Section \ref{PolySec}.

 Hence, by the above claim, if $p$ is an approximately linear polynomial, then for $f=\sgn\circ p$, and $Y$ a random variable whose low-degree moments are correct, we will have that $\E[f(\epsilon Y + \sqrt{1-\epsilon^2} X)]$ for $X$ a random Gaussian will be approximately correct.  This is because $p$ can be thought of as a nearly linear polynomial in $X$ whose coefficients are given by polynomials in $Y$.  Proposition \ref{PolyApproxProp} will therefore imply that this expectation is approximated by the expectation of some polynomial in $Y$.

Unfortunately, a generic polynomially will not necessarily be approximately linear.  We fix this by evaluating the polynomial near a random input.  In particular, if we consider $p(\epsilon X_1 + \sqrt{1-\epsilon^2}X_2)$ for a fixed random Gaussian $X_2$, the resulting polynomial in $X_1$ is likely to be approximately linear.  Such an analysis will work for a sufficiently non-singular polynomial (i.e. a polynomial whose derivative is unlikely to be small).  Not all polynomials are non-singular, but as we will show in Section \ref{NonSingSec}, any polynomial can be written in terms of non-singular polynomials.

In Section \ref{FinalSec}, we use this theory to develop a sequence of iteratively more detailed generators eventually leading to one that satisfies our requirements.  Using the ideas above, we show in Proposition \ref{oneStepProp} that  for $X$ a true $n$-dimensional Gaussian and $Y$ a $k$-wise independent family of Gaussians that $\epsilon Y + \sqrt{1-\epsilon^2}X$ produces a PRG that fools degree-$d$ PTFs to within $O_{d,k}(\epsilon^k)$.  Iteratively replacing the $X$ involved by such a generator, we obtain a PRG (see Proposition \ref{epsXProp}) given by
$$
\sum_{i=0}^{\ell-1} \epsilon (1-\epsilon^{2})^{i/2} Y_i + (1-\epsilon^{2})^{\ell/2} X.
$$
It is easy to see that for $\ell$ large, that the $X$ term may safely be removed introducing at most a small error (see Proposition \ref{finalPRGProp}).  Finally, in Theorem \ref{mainThm}, we put these results together to produce a PRG of seed length $O_{c,d}(\log(n)\epsilon^{-c})$.

\section{Background}\label{BackgroundSec}

\subsection{Notation}

We will use the notation $O_a(N)$ to denote a quantity whose absolute value is bounded above by $N$ times some constant depending only on $a$.  Throughout this paper, the variables $X,X_1,\ldots$ will be used to denote multidimensional Gaussian random variables unless stated otherwise.

We recall here the definition of a polynomial threshold function:
\begin{defn}
A function $f:\R^n\rightarrow\{\pm 1 \}$ is a (degree-$d$) \emph{polynomial threshold function} (or PTF) if it is of the form $f(x) = \sgn(p(x))$ for some (degree-$d$) polynomial $p$.
\end{defn}

Another important definition will be the following:

\begin{defn}
We say that a random variable $Y$ taking values in $\R^n$ is a \emph{$k$-design}, if all of the moments of $Y$ of order at most $k$ agree with the corresponding moments of a standard $n$-dimensional Gaussian.
\end{defn}
Note that any $k$-wise independent family of Gaussians is a $k$-design.  Also note that applying any orthogonal transformation to a $k$-design yields another $k$-design.  Throughout this paper we will use the variables $Y,Y_1,Y_i,\ldots$ to denote $k$-designs for some $k$ unless otherwise specified.

\subsection{Polynomials of Gaussians}

We recall some basic facts about polynomials of Gaussians.  We begin by recalling the $L^t$-norm of a function.

\begin{defn}
For a function $p:\R^n\rightarrow\R$, we let
$$
|p|_t = \left( \E_X[|p(X)|^t ] \right)^{1/t}.
$$
\end{defn}

We now recall some basic distributional results about polynomials evaluated at random Gaussians.

\begin{lem}[Carbery and Wright]\label{anticoncentrationLem}
If $p$ is a degree-$d$ polynomial then
$$
\pr(|p(X)| \leq \epsilon|p|_2) = O(d\epsilon^{1/d}).
$$
Where the probability is over $X$, a standard $n$-dimensional Gaussian.
\end{lem}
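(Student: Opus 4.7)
The plan is to reduce the anticoncentration bound to a negative moment estimate and then prove that estimate by induction on the degree. By homogeneity I may assume $|p|_2 = 1$, so the goal becomes $\pr(|p(X)| \leq \epsilon) = O(d \epsilon^{1/d})$. The key quantity to control is $\E[|p(X)|^{-s}]$ for $s$ just below $1/d$; if this is bounded by some $C(d)$, Markov's inequality applied to $|p(X)|^{-s}$ yields
$$\pr(|p(X)| \leq \epsilon) = \pr\bigl(|p(X)|^{-s} \geq \epsilon^{-s}\bigr) \leq C(d)\,\epsilon^{s},$$
and taking $s$ slightly less than $1/d$ gives the Carbery--Wright bound, with the $d$ prefactor absorbing the mild blow-up of $C(d)$ as $s \to 1/d$.

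For the negative moment estimate I would induct on $d$. When $d=1$, $p(X)$ is a (shifted) Gaussian of variance $|p|_2^2 = 1$, so $\E[|p(X)|^{-s}]$ is finite for every $s < 1$ by a direct integral calculation. For $d > 1$, I would pick a unit direction $v$ and decompose $X = tv + X^{\perp}$, where $t$ is a standard one-dimensional Gaussian independent of the Gaussian $X^{\perp}$ on $v^{\perp}$. Writing $p(tv + X^{\perp})$ as a polynomial in $t$ of degree at most $d$ whose coefficients are polynomial in $X^{\perp}$, I would control the inner expectation over $t$ via a one-dimensional Remez-type estimate: for any monic real polynomial $q$ of degree $d$, the sublevel set $\{|q| \leq \delta\}$ is a union of at most $d$ intervals of total Lebesgue measure $O(d\,\delta^{1/d})$, and combining this with the pointwise bound on the Gaussian density controls $\E_t\bigl[|p(tv + X^{\perp})|^{-s}\bigr]$ in terms of the leading coefficient of the polynomial in $t$. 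That leading coefficient is, up to a factorial, the $d$-th directional derivative of $p$ in direction $v$; as a function of $X^{\perp}$ it is a polynomial of strictly smaller degree, to which the inductive hypothesis applies after an $L^2$ renormalization.

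The main obstacle is forcing the induction to give a constant that is only \emph{linear} in $d$, as the statement requires. This demands care in two places: selecting $v$ so that the leading coefficient has a substantial $L^2$ mass relative to $|p|_2$, and tracking the constants lost in the one-dimensional Remez step so they do not accumulate as an exponential in $d$. A clean way to handle the direction selection is to average $v$ over the uniform measure on the sphere and exploit rotational invariance of the Gaussian, which turns the leading-coefficient estimate into a statement about the Hermite expansion of $p$ and keeps the dependence on $d$ polynomial. Packaging these ingredients together gives the anticoncentration bound claimed.
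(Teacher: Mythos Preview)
The paper does not actually prove this lemma: it is stated as a black box and attributed to Carbery and Wright, with no argument given. So there is no ``paper's proof'' to compare your proposal against; any proof you supply is necessarily your own contribution beyond what the paper does.

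As for the proposal itself, the high-level plan---bound a negative moment $\E[|p(X)|^{-s}]$ for $s$ close to $1/d$ and deduce the tail bound by Markov---is a legitimate route to \emph{some} anticoncentration inequality, and the inductive scheme you describe (slice along a direction, use a one-variable Remez/sublevel-set estimate on the resulting univariate polynomial, then induct on the leading coefficient) is a standard way to get a bound of the shape $C_d\,\epsilon^{1/d}$. The genuine gap is exactly where you flag it: this induction, as written, does not produce $C_d = O(d)$. Each inductive step costs a multiplicative factor from the Remez estimate and from renormalizing the leading coefficient, and iterating $d$ times will typically give $C_d$ growing at least like $d^{O(d)}$ or $2^{O(d)}$ rather than linearly. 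The suggestion to ``average $v$ over the sphere and exploit rotational invariance'' does not by itself repair this; it helps ensure the leading coefficient is not degenerate, but it does not eliminate the per-step loss. Carbery and Wright's actual argument is not an induction on degree of this kind---it goes through a weighted $L^q$ inequality and a careful iteration that is specifically engineered to avoid compounding losses---and that is precisely what makes the linear-in-$d$ constant nontrivial.

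So: your outline would yield a correct but quantitatively weaker lemma (with an unspecified $C_d$), which would in fact suffice everywhere this lemma is invoked in the present paper, since those applications only need $O_d(\epsilon^{1/d})$. But as a proof of the stated bound $O(d\epsilon^{1/d})$ it is incomplete, and the missing piece is not a detail---it is the heart of the Carbery--Wright theorem.
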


We will make use of the hypercontractive inequality.  The proof follows from Theorem 2 of \cite{hypercontractivity}.

\begin{lem}\label{hypercontractiveLem}
If $p$ is a degree-$d$ polynomial and $t>2$, then
$$
|p|_t \leq \sqrt{t-1}^d |p|_2.
$$
\end{lem}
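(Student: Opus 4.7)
The plan is to derive this from the classical Nelson hypercontractivity bound, which states that the Ornstein--Uhlenbeck semigroup $T_\rho$ satisfies $|T_\rho g|_t \leq |g|_2$ whenever $\rho \leq 1/\sqrt{t-1}$. The strategy is to realize $p$ as an image $T_\rho g$ of a suitable $g$, and bound $|g|_2$ in terms of $|p|_2$ using that $p$ has degree at most $d$.

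First I would decompose $p$ in the Hermite basis as $p = \sum_{k=0}^{d} p_k$, where $p_k$ is the orthogonal projection onto Hermite polynomials of degree exactly $k$. Because $p$ has degree at most $d$ in the usual sense, and ordinary monomials of degree at most $d$ expand in Hermite polynomials of degree at most $d$, this sum indeed stops at $k=d$. The semigroup acts diagonally in this basis: $T_\rho p_k = \rho^k p_k$. Now set $\rho := 1/\sqrt{t-1}$ and define
\[
g := \sum_{k=0}^{d} \rho^{-k} p_k,
\]
so that $T_\rho g = \sum_k \rho^{-k} \rho^k p_k = p$. Hermite orthogonality with respect to the Gaussian measure then gives
\[
|g|_2^2 = \sum_{k=0}^{d} \rho^{-2k} |p_k|_2^2 \leq \rho^{-2d} \sum_{k=0}^{d} |p_k|_2^2 = \rho^{-2d} |p|_2^2,
\]
since $\rho < 1$ and each $k \leq d$.

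Applying the stated hypercontractive bound to $g$ yields $|p|_t = |T_\rho g|_t \leq |g|_2 \leq \rho^{-d} |p|_2 = \sqrt{t-1}^{\,d} |p|_2$, which is the desired inequality.

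The only real subtlety, and hence the step I would be most careful about, is the bound $|g|_2 \leq \rho^{-d}|p|_2$: it requires that the Hermite expansion of $p$ is truncated at degree $d$, which uses the fact that ordinary polynomial degree dominates Hermite degree, together with the orthogonality of distinct Hermite components under the Gaussian inner product. Everything else is a direct invocation of the cited theorem.
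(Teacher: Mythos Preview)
Your argument is correct and is the standard derivation of the degree-$d$ moment inequality from Nelson's semigroup hypercontractivity; the paper itself does not spell out any proof but simply cites Theorem~2 of \cite{hypercontractivity}, so your proposal is exactly the elaboration the paper points to.
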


In particular this implies the following concentration bound:

\begin{cor}\label{ConcentrationCor}
If $p$ is a degree-$d$ polynomial and $N>0$, then
$$
\pr_X(|p(X)| > N|p|_2) = O\left(2^{-(N/2)^{2/d}} \right).
$$
\end{cor}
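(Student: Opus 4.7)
The plan is to deduce the concentration bound from the hypercontractive inequality (Lemma \ref{hypercontractiveLem}) via a standard Markov-inequality-plus-optimization argument. The key observation is that since $|p|_t$ grows only like $\sqrt{t-1}^d\,|p|_2$, large moments of $p(X)$ are controlled, and Markov's inequality applied to $|p|^t$ converts moment control into tail control.

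First I would write, for any $t > 2$,
\[
\pr_X(|p(X)| > N|p|_2) \;=\; \pr_X\bigl(|p(X)|^t > N^t |p|_2^t\bigr) \;\leq\; \frac{\E[|p(X)|^t]}{N^t |p|_2^t} \;=\; \left(\frac{|p|_t}{N|p|_2}\right)^{t}.
\]
Applying Lemma \ref{hypercontractiveLem} yields
\[
\pr_X(|p(X)| > N|p|_2) \;\leq\; \left(\frac{(t-1)^{d/2}}{N}\right)^{t}.
\]
The remaining task is to pick $t$ to make this as small as possible. A convenient choice that keeps the arithmetic clean is $t = 1 + (N/2)^{2/d}$, which forces $(t-1)^{d/2} = N/2$ and thus makes the parenthesized quantity equal to $1/2$. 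The resulting bound is $2^{-t} = \tfrac{1}{2}\cdot 2^{-(N/2)^{2/d}} = O\bigl(2^{-(N/2)^{2/d}}\bigr)$, as desired.

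The one minor issue is the constraint $t>2$ required by Lemma \ref{hypercontractiveLem}; the chosen $t$ only satisfies this when $N > 2^{1+d/2}$, i.e., for sufficiently large $N$. For smaller $N$ the stated bound is $\Omega(1)$, so the trivial bound $\pr(\cdot) \leq 1$ can be absorbed into the $O(\cdot)$ constant. I do not expect any step here to be a genuine obstacle; the argument is a textbook optimization of a hypercontractive tail bound, and the only thing to watch for is the small-$N$ boundary case and the exact choice of $t$ that produces the cleanest exponent $(N/2)^{2/d}$ in the final expression.
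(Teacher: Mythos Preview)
Your proof is correct and is essentially the same as the paper's: the paper simply says to apply Markov's inequality together with Lemma~\ref{hypercontractiveLem} with $t=(N/2)^{2/d}$, which is the same hypercontractive-moment argument you carry out (your shift $t=1+(N/2)^{2/d}$ just makes the arithmetic exact rather than an inequality). Your handling of the small-$N$ boundary case is also fine and merely makes explicit what the paper leaves implicit in the $O(\cdot)$.
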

\begin{proof}
Apply the Markov inequality and Lemma \ref{hypercontractiveLem} with $t = (N/2)^{2/d}$.
\end{proof}

\subsection{Orthogonal Polynomials}

We recall that the orthogonal polynomials form an orthonormal basis of the set of polynomials with respect to the Gaussian inner product.  Thus any polynomial can be written uniquely as a linear combination of orthogonal polynomials
$$
p(x) = \sum_{a\in \Z_{\geq 0}^n} c_a(p) h_a(x).
$$
We let
$$
p^{[k]}(x) := \sum_{|a|_1 = k} c_a(p) h_a(x)
$$
be the sum of the terms in the above decomposition consisting of orthogonal polynomials of degree exactly $k$.  Furthermore, we let
$$
p^{[\geq k]} := \sum_{m\geq k} p^{[m]}.
$$

We recall from \cite{DD} that
$$
\E\left[\left|\partial_{X_1}\cdots\partial_{X_\ell} p(X) \right|^2 \right] = \sum_{k} k(k-1)\cdots(k-\ell+1) \left| p^{[k]} \right|_2^2.
$$
Where $\partial_{X_i}$ above denotes the directional derivative in the $X_i$ direction for $X_i$ a random Gaussian.

\section{Polynomial Approximation of Expectations}\label{PolySec}

In this Section, we prove the following Proposition, which says that the expectation of a threshold function of a polynomial $p$, that is approximately linear can be approximated by a polynomial in the coefficients of $p$.

\begin{prop}\label{PolyApproxProp}
Let $d,m$ and $k$ be positive integers.  Let $p:\R^m\rightarrow \R^m$ be a degree-$d$ polynomial given by $p(x) = x + q(x)$.  Let $\epsilon,N>0$ be real numbers and let $f:\R^m \rightarrow [-1,1]$ be any function.  Then there exists a polynomial $R$ in the coefficients of $q$ of degree less than $k$, dependent only on $d,m,k,\epsilon$ and $f$ so that for all $q$
$$
\E[f(p(X))] = R(q) + O_{d,m,k,N}(\epsilon^N + \epsilon^{-1} |q|^k).
$$
Where above $|q|$ denotes the largest absolute value of a coefficient of $q$.  Furthermore, using the same notation, $|R| \leq \log(\epsilon^{-1})^{O_{d,m,k}(1)}.$
\end{prop}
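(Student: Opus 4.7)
The plan is to view $g(q) := \E[f(X + q(X))]$ as a function of the coefficients of $q$ and Taylor-expand around $q = 0$ up to degree $k - 1$. Although $f$ is merely bounded, averaging over the Gaussian $X$ smooths out the dependence on the coefficients of $q$, making $g$ analytic in a small neighborhood of $0$. To make this rigorous, I would perform a change of variables: for $|q|$ small enough (and $|x|$ not too large), $p(x) = x + q(x)$ is a local diffeomorphism, and we may write
$$g(q) = \int f(y) K(y; q) \, dy, \qquad K(y; q) = \phi(p^{-1}(y))/|\det(I + Jq(p^{-1}(y)))|,$$
where $\phi$ is the standard Gaussian density and $K(y; q)$ is the pushforward density of $p(X)$.

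Next, I would use formal power-series inversion: $p^{-1}(y) = y + r(y)$ with $r(y) = -q(y) + O(q^2)$, a formal series in the coefficients of $q$. Substituting into $\phi(y + r(y)) = \phi(y) \exp(-\langle y, r\rangle - |r|^2/2)$ and expanding the Jacobian reciprocal, one gets $K(y; q) = \phi(y) \Psi(y; q)$, where $\Psi$ is a formal power series in the coefficients of $q$ whose degree-$j$ coefficient (for $j < k$) is a polynomial in $y$ of bounded $y$-degree $O_{d, m, k}(1)$. Truncating $\Psi$ at total $q$-degree $k - 1$ gives a polynomial $P(y, q)$, and I would define
$$R(q) := \int_{|y| \leq T} f(y) \phi(y) P(y, q) \, dy,$$
where $T = \log(\epsilon^{-1})^{O_{d, N}(1)}$ is chosen via Corollary \ref{ConcentrationCor} so that both $\pr(|X| > T)$ and $\pr(|p(X)| > T)$ are at most $\epsilon^N$. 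Since $P(y, q)$ has polynomial dependence on $y$ of bounded degree, each coefficient of $R$ (as a polynomial in the coefficients of $q$) is bounded in absolute value by $\poly(T) = \log(\epsilon^{-1})^{O_{d, m, k}(1)}$, giving the stated bound on $|R|$.

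The error $g(q) - R(q)$ splits into three pieces: (i) the contribution from $|y| > T$ is at most $\epsilon^N$ by concentration; (ii) on $|y| \leq T$, provided $|q|$ is small enough for the power series to converge (with the threshold dictated by $T$), the truncation of $\Psi$ at order $k - 1$ gives a remainder at most $|q|^k \cdot \poly(T) = |q|^k \polylog(\epsilon^{-1})$; (iii) in the complementary regime of larger $|q|$, the term $\epsilon^{-1} |q|^k$ exceeds the trivial bound $|g(q)| + |R(q)| = O(\polylog(\epsilon^{-1}))$ and absorbs the error directly. The main obstacle will be step two: carefully tracking the polynomial-in-$y$ factors that arise when expanding $\phi(y + r(y))$ and $\det(I + Jq(y + r(y)))^{-1}$ as power series in the coefficients of $q$, and rigorously bounding the $k$-th Taylor remainder of $K(y; q)$ uniformly on $|y| \leq T$; the non-analytic regime of large $|q|$ in (iii) also requires care, since $R$ itself has polynomial growth in the coefficients of $q$ there.
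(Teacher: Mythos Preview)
Your proposal is correct and follows essentially the same route as the paper: restrict to a ball of radius $\sim\log(\epsilon^{-1})$, make the change of variables $y=p(x)$, expand the pushforward density $\phi(p^{-1}(y))/|\jac(p)|$ as a polynomial of degree $<k$ in the coefficients of $q$ (the paper obtains $p^{-1}$ by iterating the contraction map $M(x)=y-q(x)$, which is exactly your formal power-series inversion truncated at order $k$), integrate against $f(y)$ over the ball to define $R$, and dispose of the large-$|q|$ regime trivially since $\epsilon^{-1}|q|^k$ then dominates $|R(q)|+1$. The only cosmetic differences are that you factor out $\phi(y)$ before expanding while the paper Taylor-expands $\phi$ about $y$ directly, and the paper fixes the cutoff radius as $\log(\epsilon^{-1})$ rather than a power of it.
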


In order to expand upon the intuition behind Proposition \ref{PolyApproxProp}, we begin by sketching the proof in the case that $m=d=1$.  In this case we may write $q(x)=ax+b$.  It is then the case that
$$
\E[f(p(X))] = \E[f((1+a)X+b)] = \frac{1}{\sqrt{2\pi}}\int_{-\infty}^\infty f((1+a)x+b)e^{-x^2/2}dx.
$$
The key idea is to evaluate the above by making the change of variables $y=(1+a)x+b.$  The above is then equal to
$$
\frac{1}{\sqrt{2\pi}}\int_{-\infty}^\infty f(y) e^{-((y-b)/(1+a))^2/2} (1+a)^{-1} dy.
$$
For small $a$ and $b$, we may approximate the integrand above by a degree $k-1$ Taylor polynomial in $a$ and $b$ introducing an error on the order of $|q|^k$ in the process.  Integrating then yields a polynomial in $a$ and $b$ plus a small error.  The proof of Proposition \ref{PolyApproxProp} is a straightforward generalization of this idea, though we will see some technical difficulties arising from the more complicated change of variables, and the necessity of keeping better track of errors.

\begin{proof}
Note that $|\E[f(p(X))]|\leq 1$, therefore we may assume that $\epsilon \ll 1$, or there is nothing to prove.
Similarly, we may assume that $|q|\ll \epsilon^{1/(2k)}$ or else $\epsilon^{-1}|q|^k \gg |R(q)|+1$ and there is again nothing to prove.  In particular, we may assume that for $c(d,m)$ a sufficiently small constant (in terms of $d$ and $m$) that $\epsilon < c(d,m),$ and $|q| < c(d,m)\epsilon^{1/(2k)}.$

Note that
$$
\E[f(p(X))] = \int_{\R^m} f(p(x))\phi(x)dx
$$
where $\phi(x)=(2\pi)^{-m/2}e^{-\frac{|x|_2^2}{2}}$.  Up to an error of $O_{m,N}(\epsilon^N)$, we may ignore the integral outside of the range where $|x|_2 \leq \log(\epsilon^{-1})$.  Note furthermore, that in this range, for $\epsilon$ and $|q|$ sufficiently small, we have $$|p(x)| \leq |x| + |q(x)| \leq \log(\epsilon^{-1}) + O_{d,m}(|q|\log(\epsilon^{-1})^d) \leq 2\log(\epsilon^{-1}).$$

We claim that in this range of inputs and outputs that $p$ has a nice inverse.  In particular, if $y\in \R^m$ with $|y|_2 \leq 2\log(\epsilon^{-1})$, we claim that there is a unique $x$ with $|x|_2\leq 3 \log(\epsilon^{-1})$ so that $p(x)=y$.  To show this, we consider the map $M$ from the ball of radius $3\log(\epsilon^{-1})$ to $\R^m$ given by
$$
M(x) = y - q(x).
$$
Again, if $\epsilon$ and $|q|$ are sufficiently small, then
$$
|M(x)| \leq |y| + |q(x)| \leq 2\log(\epsilon^{-1}) + O_{d,m}(|q|\log(\epsilon^{-1})^d) \leq 3\log(\epsilon^{-1}),
$$
and thus $M$ maps the ball of radius $3\log(\epsilon^{-1})$ to itself. For $|x|\leq 3\log(\epsilon^{-1})$, we have that $|q'(x)|$ is bounded by $O_{d,m}(|q||x|^{d-1})$.  For $|q|$ a sufficiently small multiple of $\epsilon^{1/(2k)}$, this is strictly less than $1/2$.  Thus $M$ is a contraction mapping and thus has a unique fixed point.  On the other hand, $M(x)=x$ if and only if $p(x)=y$.  Therefore, for such $y$, we have a unique inverse.  We may now write our expectation as
$$
\E[f(p(X))] = \int_{\substack{|x|_2 \leq 3\log(\epsilon^{-1})\\ |p(x)|_2 \leq 2\log(\epsilon^{-1})}} f(p(x))\phi(x)dx + O_{m,N}(\epsilon^N).
$$

Our plan is now to compute this integral by making the change of variables $y=p(x)$.  We know from the above that in the domain of interest there is a function $p^{-1}$, which by the Inverse Function Theorem is necessarily smooth.  Thus,
$$
\E[f(p(X))] = \int_{|y|_2 \leq 2\log(\epsilon^{-1})} f(y) \left(\frac{\phi(p^{-1}(y))}{|\jac(p(x))|_{x=p^{-1}(y)}}\right)dy + O_{m,N}(\epsilon^N).
$$

The fundamental idea of our proof will be to approximate $\left(\frac{\phi(p^{-1}(y))}{|\jac(p(x))|_{x=p^{-1}(y)}}\right)$ by a polynomial in $q$ with coefficients depending on $y$.  Integrating the above formula for $\E[f(p(X))]$, will then yield our result.  We recall that $M$ was a contraction mapping with constant $O_{d,m}(|q|\log(\epsilon^{-1})^{d-1})<1/2$, and that $p^{-1}(y)$ is the fixed point of $M$.  Since $|M(y)-y| = O_{d,m}(|q|\log(\epsilon^{-1})^d)$, we have that $|p^{-1}(y)-y| = O_{d,m}(|q|\log(\epsilon^{-1})^d)$.  Let $M^{\ell}$ be the $\ell^{\textrm{th}}$ iterate of $M$.  Since $M$ is a contraction mapping with constant $O_{d,m}(|q|\log(\epsilon^{-1})^{d-1})$ and fixed point $p^{-1}(y)$, we have that $$|M^{\ell}(y)-p^{-1}(y)| = O_{d,m,\ell}(|q|^{\ell+1}\log(\epsilon^{-1})^{d+d\ell}).$$  Notice that for fixed $y$ that $M^\ell(y)$ is a polynomial in $q$ of degree at most $d^\ell$, whose coefficients have size at most $O_{d,m,\ell}((1+|y|_2)^{d^\ell})$.

We may Taylor expand $\phi(x)$ about $x=y$ to obtain an expression $$\phi(x) = T_{k,y}((x-y)) + O_{m,k}(|x-y|_2^k),$$ where $T_{k,y}$ is a polynomial of degree less than $k$ with coefficients of size $O_{m,k}(1)$.  Similarly, we may write $|\jac(p(x))|$ as a polynomial in $x$ and $q$ that is equal to $1+O_{d,m}(|q|(1+|x|)^{d(d-1)}).$  We may therefore Taylor expand its inverse as $$\frac{1}{|\jac(p(x))|} = S_k(x,q) + O_{d,m,k}(|q|^k(1+|x|)^{kd(d-1)}), $$ where $S_k$ is a polynomial of degree at most $k(d+1)$ and coefficients of size $O_{d,m,k}(1)$.

Putting the above together, we have that:
\begin{align*}
& \left(\frac{\phi(p^{-1}(y))}{|\jac(p(x))|_{x=p^{-1}(y)}}\right)\\ = & \left(T_{k,y}(p^{-1}(y)-y)+O_{m,k}(|q|^k\log(\epsilon^{-1})^{kd})\right)\left(S_k(p^{-1}(y),q) + O_{d,m,k}(|q|^k\log(\epsilon^{-1})^{kd(d-1)})\right)\\
= & T_{k,y}(M^k(y)-y)S_k(M^k(y),q) + O_{d,m,k}\left(|q|^k \log(\epsilon^{-1})^{O_{d,m,k}(1)} \right)\\
= & R_y(q) + O_{d,m,k}\left(|q|^k \log(\epsilon^{-1})^{O_{d,m,k}(1)} \right).
\end{align*}
Where above $R_y(q)$ is some polynomial in $q$ of degree $O_{d,m,k}(1)$ with coefficients dependent on $y$ and of size at most $\log(\epsilon^{-1})^{O_{d,m,k}(1)}$.  By absorbing the terms of $R_y$ of degree at least $k$ into the error, we may assume that $R$ has degree strictly less than $k$.  Therefore, we have that
\begin{equation}\label{unintegratedPolyApproxEqn}
\E[f(p(X))] = \int_{|y|\leq 2\log(\epsilon^{-1})} f(y) (R_y(q) + O_{d,m,k}(\log(\epsilon^{-1})^{O_{d,m,k}(1)}|q|^k)) dy + O_{m,N}(\epsilon^N).
\end{equation}
Letting,
$$
R(q) := \int_{|y|\leq 2\log(\epsilon^{-1})} f(y) R_y(q) dy,
$$
we have by Equation (\ref{unintegratedPolyApproxEqn}) that (noting that the domain of integration has volume at most $(4\log(\epsilon^{-1}))^m$)
$$
\E[f(p(X))] = R(q) + O_{d,m,k,N}(\epsilon^{-1}|q|^k + \epsilon^N).
$$
Thus completing our proof.

\end{proof}

We can use Proposition \ref{PolyApproxProp} to analyze a simple form of our generator.

\begin{prop}\label{LinearizedGeneratorProp}
Let $p$ be a degree-$d$ polynomial that can be written in the form $p(x)=h(q_1(x),\ldots,q_m(x))$ for some function $h$ and some polynomials $q_i$ of degree at most $d$.  Let $f(x)=\sgn(p(x))$ be the corresponding polynomial threshold function.  Suppose that for each $i$ that $q_i(x) = x_i + r_i(x)$ for some polynomial $r_i$.  Let $\epsilon>0$ be a real number and $k$ be an even integer.  Let $X$ be a random Gaussian and $Y$ a $kd$-design that is independent of $X$.  Then
$$
\left|\E[f(X)] - \E\left[f\left(\epsilon Y + \sqrt{1-\epsilon^2}X \right) \right] \right| = O_{d,m,k}\left(\epsilon^{k-1} + \epsilon^{-1}\sum_{i=1}^m |r_i|_2^k\right).
$$
\end{prop}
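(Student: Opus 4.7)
The plan is to reduce to Proposition \ref{PolyApproxProp} applied in ambient dimension $m$ by conditioning on $Y$ and on the last $n-m$ coordinates of $X$. Write $X = (X^{(1)}, X^{(2)})$ with $X^{(1)} \in \R^m$, and absorb the factor $\sqrt{1-\epsilon^2}$ by setting $\tilde h(u) := h(\sqrt{1-\epsilon^2}\, u)$ and $\tilde g := \sgn \circ \tilde h$, so that
$$
f\bigl(\epsilon Y + \sqrt{1-\epsilon^2}\, X\bigr) = \tilde g\bigl(X^{(1)} + \rho(X^{(1)})\bigr),
$$
where, for fixed $(Y, X^{(2)})$, the polynomial $\rho : \R^m \to \R^m$ is defined by
$$
\rho_i(x) = \frac{\epsilon Y_i + r_i\bigl(\epsilon Y + \sqrt{1-\epsilon^2}\,(x, X^{(2)})\bigr)}{\sqrt{1-\epsilon^2}}.
$$
Taylor expanding $r_i$ around $\epsilon Y + \sqrt{1-\epsilon^2}(0, X^{(2)})$ shows that every monomial coefficient of $\rho_i$ (in the variable $x$) is a polynomial of degree $\leq d$ in $(Y, X^{(2)})$.

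Applying Proposition \ref{PolyApproxProp} to $(\tilde g, \rho)$ gives
$$
\E_{X^{(1)}}\bigl[ f(\epsilon Y + \sqrt{1-\epsilon^2} X) \mid Y, X^{(2)}\bigr] = R(\rho) + O_{d,m,k,N}\bigl(\epsilon^N + \epsilon^{-1} |\rho|^k\bigr)
$$
for some polynomial $R$ of degree $<k$ in the coefficients of $\rho$. Consequently $R(\rho)$ is a polynomial of degree $<kd$ in $(Y, X^{(2)})$, and after integrating out $X^{(2)}$ we obtain a polynomial of degree $<kd$ in $Y$. The $kd$-design property of $Y$ lets us replace $Y$ by an independent standard Gaussian $Y'$ without altering this expectation; since $\epsilon Y' + \sqrt{1-\epsilon^2}\,X$ is itself a standard Gaussian, the resulting expectation equals $\E[f(X)]$. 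Subtracting and applying the triangle inequality, $\bigl|\E[f(X)] - \E[f(\epsilon Y + \sqrt{1-\epsilon^2} X)]\bigr|$ is bounded by $O_{d,m,k,N}\bigl(\epsilon^N + \epsilon^{-1} \E_{Y, X^{(2)}}[|\rho|^k]\bigr)$.

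The main technical step is bounding $\E_{Y, X^{(2)}}[|\rho|^k]$. Since the $L^\infty$ (monomial-coefficient) and $L^2$ norms are equivalent on polynomials of degree $\leq d$ in $m$ variables, $|\rho| \leq C_{d,m} \max_i |\rho_i|_2$, where $|\rho_i|_2$ denotes the $L^2$ norm over $X^{(1)}$. By the triangle inequality,
$$
|\rho_i|_2 \leq \frac{|\epsilon Y_i| + |r_i(\epsilon Y + \sqrt{1-\epsilon^2}\,(\cdot, X^{(2)}))|_2}{\sqrt{1-\epsilon^2}}.
$$
Raising to the $k$-th power (using that $k$ is even to drop absolute values) and taking $\E_{Y, X^{(2)}}$, the $\epsilon Y_i$ term contributes $O_k(\epsilon^k)$ by the $k$-design property. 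For the other term, Jensen's inequality gives $|r_i(\epsilon Y + \sqrt{1-\epsilon^2}\,\cdot)|_2^k \leq \E_{X^{(1)}}[r_i(\epsilon Y + \sqrt{1-\epsilon^2} X)^k]$; the resulting $k$-th moment is a polynomial of degree $kd$ in $(Y, X)$, whose $Y$-expectation matches the Gaussian case by the $kd$-design property, yielding $|r_i|_k^k \leq O_{d,k}(|r_i|_2^k)$ by Lemma \ref{hypercontractiveLem}. Summing over $i$ and choosing $N = k-1$ delivers the desired bound $O_{d,m,k}\bigl(\epsilon^{k-1} + \epsilon^{-1}\sum_i |r_i|_2^k\bigr)$.
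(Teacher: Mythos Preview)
Your proof is correct and follows essentially the same approach as the paper's: split $X$ into its first $m$ coordinates and the rest, rewrite $f(\epsilon Y+\sqrt{1-\epsilon^2}X)$ as $\tilde g(X^{(1)}+\rho(X^{(1)}))$ for a perturbation $\rho$ depending on $(Y,X^{(2)})$, apply Proposition~\ref{PolyApproxProp} in dimension $m$, use the $kd$-design property to match the polynomial term $R(\rho)$, and bound $\E[|\rho|^k]$ via norm equivalence on degree-$d$ polynomials, the design property, and hypercontractivity. The only cosmetic difference is that the paper phrases the argument as ``the expectation is determined by the low-order moments of $Y$'' while you explicitly compare against an independent Gaussian $Y'$; these are equivalent, and your handling of the error term (noting that the same moment bound works for both $Y$ and $Y'$) is fine.
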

\begin{proof}
Note that $X$ can be written as the sum $\epsilon X_1 + \sqrt{1-\epsilon^2}X_2$ for $X_1$ and $X_2$ independent Gaussians.  Hence it suffices to show that $\E\left[f\left(\epsilon Y + \sqrt{1-\epsilon^2}X \right) \right]$ is determined to within $O_{d,m,k}\left(\epsilon^{k-1} + \epsilon^{-1}\sum_{i=1}^m |r_i|_2^k\right)$ simply be the low degree moments of $Y$.

We may rewrite $X$ as $(X_0,X_1)$, where $X_0$ is the Gaussian given by the first $m$ coordinates of $X$ and $X_1$ consists of the remaining coordinates.  We let $Q(x_0,x_1,y)$ be the vector-valued polynomial given by
\begin{align*}
Q(X_0,X_1,Y)_i & = \frac{q_i\left(\epsilon Y + \sqrt{1-\epsilon^2}(X_0,X_1) \right)}{\sqrt{1-\epsilon^2}}\\
& = (X_0)_i + \left(\frac{\epsilon Y_i + r_i\left(\epsilon Y + \sqrt{1-\epsilon^2}(X_0,X_1) \right)}{\sqrt{1-\epsilon^2}} \right).
\end{align*}
Upon fixing values for $Y$ and $X_1$ we let $q^{Y,X_1}(X_0)$ be the vector valued polynomial given by
$$
q^{X_1,Y}_i(X_0) := \left(\frac{\epsilon Y_i + r_i\left(\epsilon Y + \sqrt{1-\epsilon^2}(X_0,X_1) \right)}{\sqrt{1-\epsilon^2}} \right).
$$
We have that
$$
Q(X_0,X_1,Y) = X_0 + q^{X_1,Y}(X_0).
$$

We have that
\begin{align*}
\E\left[ f\left(\epsilon Y + \sqrt{1-\epsilon^2}X \right)\right] & = \E\left[\sgn\left(h\left(\sqrt{1-\epsilon^2}Q(X_0,X_1,Y) \right) \right) \right]\\
& = \E\left[g(Q(X_0,X_1,Y)) \right]\\
& = \E_{X_1,Y}[\E_{X_0}[g(Q(X_0,X_1,Y))]]\\
& = \E_{X_1,Y}[R(q^{X_1,Y}) + O_{d,m,k}(\epsilon^{-1}|q^{X_1,Y}|^k + \epsilon^k)].
\end{align*}
Where $g$ above is given by $g(x) = \sgn(h(\sqrt{1-\epsilon^2}x))$, and $R$ is the appropriate polynomial given by Proposition \ref{PolyApproxProp}.  Since the expectation of $R(q^{X_1,Y})$ is determined the moments $Y$ up to degree $kd$, this expectation is determined up to an error of
$$
O_{d,m,k}\left(\epsilon^k + \epsilon^{-1}\E[|q^{X_1,Y}|^k] \right).
$$
We note that $|q^{X_1,Y}| = O_{d,m}(|q^{X_1,Y}|_2) = O_{d,m,k}(|q^{X_1,Y}|_k).$  Therefore the error above is
\begin{align*}
& O_{d,m,k}\left(\epsilon^k + \epsilon^{-1}\E[|q^{X_1,Y}(X_0)|^k]  \right)\\
= & O_{d,m,k}\left(\epsilon^k + \epsilon^{-1}\sum_{i=1}^m\E[|\epsilon Y_i + r_i\left(\epsilon Y + \sqrt{1-\epsilon^2}(X_0,X_1) \right)|^k]  \right)\\
= & O_{d,m,k}\left(\epsilon^k + \epsilon^{-1}\left(\epsilon^k + \sum_{i=1}^m |r_i|_2^k \right)  \right)\\
= & O_{d,m,k}\left(\epsilon^{k-1} + \epsilon^{-1}\sum_{i=1}^m |r_i|_2^k\right).
\end{align*}
Where the second to last line above is by Lemma \ref{hypercontractiveLem} and the fact that $Y$ is a $kd$-design.
\end{proof}

\section{Non-Singular Sets}\label{NonSingSec}

Our basic plan will be to use Proposition \ref{LinearizedGeneratorProp} to show that the generator $\epsilon Y + \sqrt{1-\epsilon^2}X$ fools all polynomial threshold functions. The idea will be to let $\sqrt{1-\epsilon^2}X = \sqrt{\epsilon}X_1 + \sqrt{1-\epsilon-\epsilon^2}X_2$ for $X_1$ and $X_2$ independent Gaussians.  Upon fixing a random value for $X_2$, it is not hard to show that the resulting polynomial of $\epsilon Y + \sqrt{\epsilon} X_1$ will likely have its quadratic terms of size $\tilde O(\epsilon)$.  Were it the case that the linear term of this polynomial were $\Theta(\sqrt{\epsilon})$, (as seems likely) we could apply Proposition \ref{LinearizedGeneratorProp} almost immediately.  Unfortunately, if this polynomial has essentially no linear terms, this technique may fail.  The possibility of this failure is closely related to our original polynomial having small derivatives near $X_2$.  We will want to consider polynomials for which this does not happen with non-negligible probability.

\begin{defn}
Given a sequence of polynomials $(q_1,\ldots,q_m)$, we say that they form an \emph{$(\epsilon,c,N)$-non-singular set} if
$$
\pr_X\left(\left|\bigwedge_{j}\partial q_j(X) \right|_2 < \epsilon^c \right) < \epsilon^N.
$$

We recall the definition from \cite{DD} that for a degree-$d$ polynomial $p:\R^n\rightarrow\R$, we say that a set of polynomials $(h,q_1,\ldots,q_m)$ is a \emph{decomposition of $p$ of size $m$} if $q_i:\R^n\rightarrow \R$, and $h:\R^m\rightarrow\R$ are polynomials so that
\begin{itemize}
\item $p(x) = h(q_1(x),\ldots,q_m(x))$
\item For every monomial $\prod x_i^{a_i}$ appearing in $h$, we have that $\sum a_1 \deg(q_i) \leq d$
\end{itemize}

Furthermore, we say that a polynomial $p$ has an \emph{$(\epsilon,c,N)$-non-singular decomposition of size $m$} if $p$ has a decomposition $(h,q_1,\ldots,q_m)$ with $|q_i|_2\leq 1$ for all $i$ and so that $(q_1,\ldots,q_m)$ is an $(\epsilon,c,N)$-non-singular set.
\end{defn}

The key fact about these decompositions that we will need is the following structure theorem.

\begin{thm}\label{nonSingDecompThrm}
Let $p$ be a degree-$d$ polynomial, and let $\epsilon,c,N>0$.  Then there exists a degree-$d$ polynomial $p_0$ with $|p-p_0|_2 = O_{c,d,N}(\epsilon^N)|p|_2$ so that $p_0$ has an $(\epsilon,c,N)$-non-singular decomposition of size $O_{c,d,N}(1)$.
\end{thm}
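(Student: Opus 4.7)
My plan is to prove the theorem by induction on the degree $d$, combined with an iterative refinement of the decomposition. Throughout, I normalize so that $|p|_2 = 1$.

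For the base case $d=1$, any affine $p$ has $\nabla p$ a constant vector of norm $1$, so the trivial decomposition $m=1$, $q_1 = p$, $h(y)=y$ makes $(q_1)$ an $(\epsilon,c,N)$-non-singular set for all $c,N$ (with $p_0 = p$).

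For the inductive step, I assume the theorem for all polynomials of degree strictly less than $d$. Starting from the trivial tuple $(q_1) = (p)$, I would iteratively refine. At each stage, with current tuple $(q_1,\ldots,q_m)$, I check the non-singularity condition by applying Carbery-Wright (Lemma \ref{anticoncentrationLem}) to the polynomial $W = |\bigwedge_j \partial q_j|_2^2$, which has degree $O_d(m)$. This gives
$$\pr(W < \epsilon^{2c}) \leq O_{d,m}\bigl((\epsilon^{2c}/|W|_2)^{1/O_d(m)}\bigr),$$
so non-singularity holds as soon as $|W|_2$ is at least $\epsilon^{2c - N \cdot O_d(m)}$ (up to constants).

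When $|W|_2$ is too small, I use the formula $\E[|\partial_{X_1}\cdots\partial_{X_\ell} p(X)|^2] = \sum_k k(k-1)\cdots(k-\ell+1)|p^{[k]}|_2^2$ together with hypercontractivity (Lemma \ref{hypercontractiveLem}) to deduce that the higher-degree Hermite components of some offending $q_j$ must carry only small $L^2$ mass. From this I extract a polynomial $r$ of degree strictly less than $\max_j \deg(q_j)$, close in $L^2$ to this $q_j$, and substitute $r$ into the decomposition, absorbing the difference into the error $p-p_0$. The inductive hypothesis applied to $r$ then provides a non-singular decomposition of $r$ of size $O_{c,d-1,N}(1)$, which I plug back into the outer polynomial $h$ to obtain a refined decomposition whose maximum degree has strictly decreased, while preserving the degree condition $\sum a_i\deg(q_i) \leq d$ on each monomial of $h$.

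The recursion bottoms out after at most $d$ levels (when all polynomials in the tuple are linear, trivially non-singular). The total size is bounded by the product of the decomposition sizes at each degree level, $\prod_{d' < d} O_{c,d',N}(1) = O_{c,d,N}(1)$, and the accumulated $L^2$ error is a sum of $O_{c,d,N}(1)$ terms each of size $O(\epsilon^N)|p|_2$, which remains $O_{c,d,N}(\epsilon^N)|p|_2$.

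The main obstacle is the quantitative extraction step: translating the $L^2$-smallness of the wedge polynomial $W$ into an explicit replacement $r$ of strictly lower degree with a tight $L^2$ approximation bound, while ensuring that the modified outer polynomial $h$ still respects the degree condition on its monomials. I expect the bulk of the proof to consist of a careful analysis of Hermite expansions of the $q_j$ together with the identity above, truncating low-mass high-degree components and estimating the resulting error, which then composes across the iterations of the refinement.
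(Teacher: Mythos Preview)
The paper does not actually prove this theorem here: its entire proof is the single sentence ``This follows from the proof of the Diffuse Decomposition Theorem of \cite{DD}.'' So there is no in-paper argument to compare against, only a citation to an external structure theorem.

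Your outline has the right overall shape (iterate: either the current tuple is already non-singular by Carbery--Wright, or exploit the failure to simplify), but the central extraction step contains a genuine gap. From $|W|_2$ small, where $W = \left|\bigwedge_j \partial q_j\right|_2^2$, you conclude that ``the higher-degree Hermite components of some offending $q_j$ must carry only small $L^2$ mass.'' This inference is not valid: smallness of the wedge says the gradients $\partial q_j$ are typically close to linearly \emph{dependent}, not that any individual $\partial q_j$ is small. For instance, take $q_1 = x_1$ and $q_2 = h_d(x_1)$, the degree-$d$ univariate Hermite polynomial; then $\partial q_1$ and $\partial q_2$ are both supported in the $e_1$-direction, so $W \equiv 0$, yet $q_2$ has all of its $L^2$ mass in Hermite degree $d$. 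There is no lower-degree replacement for $q_2$; the correct move is to recognize that $q_2$ is (approximately) a function of $q_1$ and absorb it into $h$, which changes $h$ and $m$ rather than $\max_j \deg(q_j)$. In general the structural consequence of a small wedge is an approximate algebraic relation among the $q_j$ (possibly together with some new auxiliary polynomials), and the decomposition arguments in \cite{DD} proceed by introducing such auxiliaries and tracking a more refined complexity measure than the maximum degree. Your induction on $d$ alone, with the stated degree-lowering step, does not go through as written.

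A related smaller issue: the claim that a tuple of linear polynomials is ``trivially non-singular'' is false unless the linear forms are linearly independent; if the iteration ever produces a dependent family of linear $q_j$, the wedge vanishes identically and you must again eliminate one via $h$ rather than by lowering degree.
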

\begin{proof}
This follows from the proof of the Diffuse Decomposition Theorem of \cite{DD}.
\end{proof}

\section{The PRG}\label{FinalSec}

In this Section, we will prove a sequence of increasingly more powerful results for PRGs.  We begin by showing that if our polynomial has a non-singular decomposition that $\epsilon Y + \sqrt{1-\epsilon^2}X$ is an appropriate generator.

\begin{prop}\label{RegPRGProp}
Let $d,k$ be integers and $\epsilon>0$.  Let $p$ be a degree-$d$ polynomial with an $(\epsilon,1/10,k)$-non-singular decomposition of size $m$.  Let $f$ be the corresponding polynomial threshold function.  Let $X$ be a Gaussian, and $Y$ a $10kd$-design independent of $X$.  Then
$$
\left|\E[f(X)] - \E\left[f\left( \epsilon Y + \sqrt{1-\epsilon^2} X \right) \right]\right| = O_{d,m,k}(\epsilon^k).
$$
\end{prop}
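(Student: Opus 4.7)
The plan is to reduce Proposition~\ref{RegPRGProp} to Proposition~\ref{LinearizedGeneratorProp} by conditioning on a bulk Gaussian. I would write $X = \sqrt{\alpha}\, X_0 + \sqrt{1-\alpha}\, X_2$ with $\alpha := \epsilon + \epsilon^2$, and simultaneously $\sqrt{1-\epsilon^2}\, X = \sqrt{\epsilon}\, X_1 + \sqrt{1-\epsilon-\epsilon^2}\, X_2$, where $X_0, X_1, X_2$ are independent standard Gaussians and the two copies of $X_2$ are coupled to be the same random variable. Both $\E[f(X)]$ and $\E[f(\epsilon Y + \sqrt{1-\epsilon^2}\, X)]$ can then be written as iterated expectations whose outer expectation is over $X_2$; after factoring out $\sqrt{\alpha}$ from the inner Gaussians, the inner expectations take the respective forms $\E_{X_0}[\sgn(P(X_0))]$ and $\E_{Y,X_1}[\sgn(P(\epsilon'\, Y + \sqrt{1-(\epsilon')^2}\, X_1))]$, where $P(u) := p(\sqrt{\alpha}\, u + \sqrt{1-\alpha}\, X_2)$ and $\epsilon' := \sqrt{\epsilon/(1+\epsilon)} \asymp \sqrt{\epsilon}$. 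This is exactly the setup of Proposition~\ref{LinearizedGeneratorProp}, applied conditionally on $X_2$ to $P$, so it suffices to furnish a decomposition $P = H(Q_1, \ldots, Q_m)$ with $Q_i(u) = u_i + R_i(u)$ and $|R_i|_2$ small.

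Producing such a decomposition is where non-singularity enters. Let $v_i := \nabla q_i(\sqrt{1-\alpha}\, X_2)$, so that $\tilde q_i(u) := q_i(\sqrt{\alpha}\, u + \sqrt{1-\alpha}\, X_2)$ has linear-in-$u$ part $\sqrt{\alpha}\, \dotp{v_i}{u}$ and degree-$j$ monomial coefficients bounded by $\alpha^{j/2}$ times a higher derivative of $q_i$ evaluated at $\sqrt{1-\alpha}\, X_2$. Because $\sqrt{1-\alpha}\, X_2$ is close to a standard Gaussian (as $\alpha = o(1)$), the non-singularity hypothesis gives $|v_1 \wedge \cdots \wedge v_m|_2 \geq \epsilon^{1/10}$ with probability at least $1 - O(\epsilon^k)$ over $X_2$, and Corollary~\ref{ConcentrationCor} ensures that $|v_i|_2$ and the higher-derivative norms are $\polylog(\epsilon^{-1})$. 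On this good event, apply the orthogonal transformation of $\R^n$ that sends the Gram--Schmidt basis of $\{v_1, \ldots, v_m\}$ to the first $m$ standard basis vectors; this preserves the distribution of $X_0, X_1$ and the design property of $Y$. The $m \times m$ matrix $A$ formed by the first $m$ coordinates of the rotated $v_i$ is then lower triangular with $|\det A| \geq \epsilon^{1/10}$, so $\|A^{-1}\| \leq \tilde O(\epsilon^{-1/10})$. Absorbing the constants $q_i(\sqrt{1-\alpha}\, X_2)$ into the outer function and setting $Q_i(u) := \alpha^{-1/2} \sum_j (A^{-1})_{ij} \bigl(\tilde q_j(u) - q_j(\sqrt{1-\alpha}\, X_2)\bigr)$ yields $Q_i(u) = u_i + R_i(u)$ with $|R_i|_2 \leq \tilde O(\epsilon^{-1/10} \cdot \sqrt{\alpha}) = \tilde O(\epsilon^{2/5})$, by a term-by-term hypercontractive estimate on each higher-degree monomial.

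Feeding this decomposition into Proposition~\ref{LinearizedGeneratorProp} with parameter $k' = \Theta(k)$ (any $k' \geq 3k$ suffices) makes both $(\epsilon')^{k'-1}$ and $(\epsilon')^{-1} \sum_i |R_i|_2^{k'}$ of size $O(\epsilon^k)$, since $\epsilon' \asymp \sqrt{\epsilon}$ and $|R_i|_2^{k'} \leq \tilde O(\epsilon^{2k'/5})$; the required $k'd$-design property of $Y$ is supplied by the hypothesis that $Y$ is a $10kd$-design. Averaging over $X_2$, the good event contributes $O(\epsilon^k)$ and the bad event contributes at most $2\pr(\text{bad}) = O(\epsilon^k)$, giving the claimed total error. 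The main obstacle is producing the linearized decomposition so that the $1/\sqrt{\alpha}$ scaling and the blow-up from $\|A^{-1}\|$ combine to give $|R_i|_2 \ll \epsilon^{1/2}$; the non-singularity exponent $1/10$ is calibrated precisely for this balance. A secondary technical point is transferring the non-singularity guarantee from a standard Gaussian to $\sqrt{1-\alpha}\, X_2$, which costs only a constant factor in the exponent (absorbed into the $O_{d,m,k}$ constants) because $\alpha = o(1)$.
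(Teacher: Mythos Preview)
Your proposal is correct and follows essentially the same route as the paper: split off a bulk Gaussian $X_2$ carrying weight $\sqrt{1-\epsilon-\epsilon^2}$, use the non-singularity hypothesis to show that the conditioned polynomial has a decomposition whose pieces are of the form $u_i+R_i(u)$ with $|R_i|_2$ a small positive power of $\epsilon$, and then invoke Proposition~\ref{LinearizedGeneratorProp} with an inner parameter $\epsilon'\asymp\sqrt{\epsilon}$. The only noteworthy difference is in how the linearization is produced: the paper works with the Hermite decomposition $q_i^{X_2}=L_i+(q_i^{X_2})^{[\geq 2]}$ and applies non-singularity at the genuine standard Gaussian $W=\sqrt{\alpha}\,Z+\sqrt{1-\alpha}\,X_2$, which avoids your transfer step entirely; your Taylor expansion at the scaled point $\sqrt{1-\alpha}\,X_2$ forces you to move both the wedge lower bound and the derivative upper bounds from a standard Gaussian to one of variance $1-\alpha$, which is doable (e.g.\ via the same coupling $X=\sqrt{1-\alpha}\,X_2+\sqrt{\alpha}\,X'$ and a perturbation argument on $\nabla q_i$) but is the one place where your write-up would need a few more lines to be airtight.
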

\begin{proof}
First we assume that $\epsilon$ is sufficiently small given $d,m$ and $k$, for otherwise there is nothing to prove.

It suffices to show that the expectation of $f\left( \epsilon Y + \sqrt{1-\epsilon^2} X \right) $ is determined to within $O_{d,m,k}(\epsilon^k)$ by the low order moments of $Y$.

Let $p$ have the $(\epsilon,1/10,k)$-non-singular decomposition $(h,q_1,\ldots,q_m)$.  Write $\sqrt{1-\epsilon^2}X$ as $\sqrt{\epsilon}X_1+\sqrt{1-\epsilon-\epsilon^2}X_2$ for $X_1$ and $X_2$ independent Gaussians. Let $\epsilon X_0 + \sqrt{\epsilon} X_1 = \sqrt{\epsilon + \epsilon^2}Z$ for $X_0$ an independent Gaussian, and $W=\epsilon X_0 +\sqrt{1-\epsilon^2}X$.  Consider each of the $q_i$ as functions of $Z$ and $X_2$.  Thinking of $X_2$ as fixed let $q_i^{X_2}(Z) = q_i(X_2,Z)$.  Notice that
\begin{align*}
\E_{X^2}\left[ \left| \left(q_i^{X^2}\right)^{[\geq 2]} \right|_2^2 \right] & \leq \E_{X_2,Z,X_3,X_4}[|\partial_{X_3}^Z \partial_{X_4}^Zq_i(X_2,Z)|_2^2  ]\\
& = (\epsilon + \epsilon^2)^2 \E[|\partial_{X_3}^W \partial_{X_4}^Wq_i(W)|_2^2  ]\\
& = O(d^2 \epsilon^2 |q_i|_2^2)\\
& = O(d^2\epsilon^2).
\end{align*}
Where $\partial_{X_i}^Z$ above denotes the directional derivative of with respect to $Z$ in the direction of $X_i$.
Thus, since $\left| \left(q_i^{X_2}\right)^{[\geq 2]} \right|_2^2$ is given by a polynomial in $X_2$, we have by Corollary \ref{ConcentrationCor} that with probability $1-O_{d,m,k}(\epsilon^k)$ that $\left| \left(q_i^{X_2}\right)^{[\geq 2]} \right|_2 \leq \epsilon\log(\epsilon^{-1})^d$ for all $i$.  Similarly, we may show that with this same probability that  $\left| \left(q_i^{X_2}\right)^{[1]} \right|_2 \leq \sqrt{\epsilon}\log(\epsilon^{-1})^d$ for all $i$.  For $X_2$ fixed, let $L_i := \left(q_i^{X_2}\right)^{[1]}.$

Note that with high probability
$$
\partial^Z q_i(Z,X_2) = \partial^Z L_i(Z) + O(\epsilon\log(\epsilon^{-1})^d).
$$
on the other hand, we have that
$$
\partial^Z q_i(Z,X_2) = \sqrt{\epsilon+\epsilon^2}\partial^W q_i(W).
$$
By non-singularity this means that with probability $1-O_{d,m,k}(\epsilon^k)$ we have
$$
\left|\bigwedge_{i} \left( \partial L_i(Z)+ O(\epsilon\log(\epsilon^{-1})^d) \right) \right|_2^2 > \epsilon^{m+1/5}.
$$
On the other hand, the left hand side of the above is
$$
\left|\bigwedge_{i} \left( \partial L_i(Z) \right) \right|_2^2 +O_{d,m}(\epsilon^{m+1/2}\log(\epsilon^{-1})^{2dm}).
$$
Thus for $\epsilon$ sufficiently small, we have with probability at least $1-O_{d,m,k}(\epsilon^k)$ over the choice of $X_2$ that
$$
\left|\bigwedge_{i} \left( \partial L_i(Z) \right) \right|_2^2 > \frac{\epsilon^{m+1/5}}{2}.
$$
If this is the case, then the product of the singular values of the matrix with rows given by the gradients of the $L_i$ is at least $\epsilon^{m/2+1/10}$.  Since none of the singular values can be larger than $O_m(\epsilon^{1/2}\log(\epsilon^{-1})^d)$, this implies that all of the singular values of this matrix are at least $\epsilon^{1/4}$.  Thus if the $q_i$ are replaced by a appropriate linear combinations of their old values (with coefficients at most $\epsilon^{-3/4}$) we can ensure that the $\partial L_i(Z)$ are orthonormal.  By making an appropriate change of variables for $Z$, we may assume that $L_i(Z)=Z_i$.  Removing the degree-$0$ harmonic part of $q_i^{X^2}$, we may assume that $q_i^{X_2}(Z)=Z_i + r_i(Z)$ with $|r_i|_2 = O_m(\epsilon^{1/4} \log(\epsilon^{-1})^d)$.

To summarize, with probability at least $1-O_{d,m,k}(\epsilon^k)$ over the choice of $X_2$, there is an orthogonal change of variables for $Z$, and a sequence of polynomials $q_i',r_i$ with $q_i'(Z) = Z_i + r_i(Z)$ and $|r_i(Z)|_2 =O_m(\epsilon^{1/4} \log(\epsilon^{-1})^d)$ so that $p(Z,X_2)$ has a decomposition into the $q_i'$.  Applying Proposition \ref{LinearizedGeneratorProp}, we find that with probability $1-O_{d,m,k}(\epsilon^k)$ over $X_2$ we have that:
\begin{align*}
&\left| \E_Z\left[ f\left( \sqrt{\epsilon + \epsilon^2}Z + \sqrt{1-\epsilon-\epsilon^2}X_2\right)\right] - \E_{Y,X_1}\left[ f\left( \epsilon Y + \sqrt{\epsilon} X_1 + \sqrt{1-\epsilon-\epsilon^2}X_2\right)\right]\right|\\
&  \ \ \ \ \ = O_{d,k,m}(\epsilon^{5k-1}+(\epsilon^{1/4}\log(\epsilon^{-1})^d)^{10k}) = O_{d,k,m}(\epsilon^k).
\end{align*}
Taking an expectation over $X_2$ completes our proof.
\end{proof}

Next we use Theorem \ref{nonSingDecompThrm} to extend Proposition \ref{RegPRGProp} to arbitrary polynomial threshold functions.

\begin{prop}\label{oneStepProp}
Let $f$ be a degree-$d$ polynomial threshold function.  Let $\epsilon>0$ and $k$ be an integer.  Let $X$ be a random Gaussian and $Y$ a $10kd$-design independent of $X$. It is the case that
$$
\left|\E[f(X)] - \E\left[f\left( \epsilon Y + \sqrt{1-\epsilon^2} X \right) \right]\right| = O_{d,k}(\epsilon^k).
$$
\end{prop}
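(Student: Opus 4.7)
The strategy is to reduce to the non-singular case handled by Proposition~\ref{RegPRGProp} via Theorem~\ref{nonSingDecompThrm}. Assume without loss of generality that $|p|_2 = 1$ (sign is scale-invariant), and write $U := \epsilon Y + \sqrt{1-\epsilon^2}X$. Apply Theorem~\ref{nonSingDecompThrm} with $c = 1/10$ and $N := 2kd + k$, producing a degree-$d$ polynomial $p_0$ with $|p - p_0|_2 \le O_{d,k}(\epsilon^N)$ and an $(\epsilon, 1/10, k)$-non-singular decomposition of size $m = O_{d,k}(1)$. Set $f_0 := \sgn(p_0)$ and $q := p - p_0$. The triangle inequality bounds $|\E[f(X)] - \E[f(U)]|$ by
\begin{equation*}
\bigl|\E[f(X)] - \E[f_0(X)]\bigr| + \bigl|\E[f_0(X)] - \E[f_0(U)]\bigr| + \bigl|\E[f_0(U)] - \E[f(U)]\bigr|,
\end{equation*}
and Proposition~\ref{RegPRGProp} applied to $p_0$ controls the middle term by $O_{d,k}(\epsilon^k)$.

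For the first term, $f$ and $f_0$ can disagree at $x$ only when $|p_0(x)| \le |q(x)|$. Markov on $|q|^2$ gives $|q(X)| \le \epsilon^{N/2}$ with $X$-probability at least $1-\epsilon^N$, and Lemma~\ref{anticoncentrationLem} applied to $p_0$ (noting $|p_0|_2 \ge 1/2$ for $\epsilon$ small) shows $|p_0(X)| \le \epsilon^{N/2}$ has probability $O(d\epsilon^{N/(2d)})$. The choice $N \ge 2kd$ makes both terms $O_{d,k}(\epsilon^k)$.

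The third term is the main obstacle, because Lemma~\ref{anticoncentrationLem} does not apply to the non-Gaussian $U$. The workaround rests on two observations. First, $U$ itself is a $10kd$-design: expanding any monomial in $U$ of degree at most $10kd$ and using independence of $Y$ and $X$ together with their matching Gaussian moments to order $10kd$ makes each such expectation equal to the Gaussian one. In particular $\E_U[q(U)^2] = |q|_2^2 \le O(\epsilon^{2N})$, so Markov again gives $\pr_U(|q(U)| > \epsilon^{N/2}) \le \epsilon^N$. Second, to handle $\pr_U(|p_0(U)| \le \tau)$ with $\tau := \epsilon^{N/2}$, use (up to a measure-zero boundary correction) the identity
\begin{equation*}
\pr_U(|p_0(U)| \le \tau) = \tfrac{1}{2}\bigl(\E_U[\sgn(p_0(U) + \tau)] - \E_U[\sgn(p_0(U) - \tau)]\bigr).
\end{equation*}
Both polynomials $p_0 \pm \tau$ inherit the $(\epsilon,1/10,k)$-non-singular decomposition of $p_0$ (absorb the constant into $h$; the $q_i$ and hence their gradients are unchanged), so Proposition~\ref{RegPRGProp} applies to each of the PTFs $\sgn(p_0 \pm \tau)$, transferring each expectation to the Gaussian side up to $O_{d,k}(\epsilon^k)$. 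Invoking Lemma~\ref{anticoncentrationLem} for $p_0$ under $X$ then bounds the Gaussian analogue by $O(d\epsilon^{N/(2d)}) = O_{d,k}(\epsilon^k)$. Combining gives $\pr_U(|p_0(U)| \le \tau) = O_{d,k}(\epsilon^k)$, and summing the three contributions completes the proof.
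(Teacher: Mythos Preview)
Your proof is correct and uses the same ingredients as the paper's: approximate $p$ by a non-singular $p_0$ via Theorem~\ref{nonSingDecompThrm}, use that $U$ is a design to control $p-p_0$ under $U$ by Markov, note that $p_0\pm\tau$ inherit the non-singular decomposition so Proposition~\ref{RegPRGProp} applies to them, and invoke Carbery--Wright on the Gaussian side. The only cosmetic difference is that the paper packages these steps as a one-sided sandwich chain $\E[f(U)] \le \E[\sgn(p_0(U)+\tau)] + O_{d,k}(\epsilon^k) = \E[\sgn(p_0(X)+\tau)] + O_{d,k}(\epsilon^k) = \E[\sgn(p_0(X)-\tau)] + O_{d,k}(\epsilon^k) \le \E[f(X)] + O_{d,k}(\epsilon^k)$ rather than your symmetric three-term triangle inequality.
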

\begin{proof}

Let $f=\sgn(p(x))$ for some degree-$d$ polynomial $p$ with $|p|_2=1$.  By Theorem \ref{nonSingDecompThrm}, there exists a degree-$d$ polynomial $p_0$ so that $|p-p_0|_2 =O_{d,k}(\epsilon^{2kd+k}) $ so that $p_0$ has an $(\epsilon,1/10,k)$-non-singular decomposition of size $m=O_{d,k}(1)$.  Since $\epsilon Y + \sqrt{1-\epsilon^2}X$ is a $2d$-design, we have by the Markov bound that with probability $1-O_{d,k}(\epsilon^k)$ that
$$
\left|p\left(\epsilon Y + \sqrt{1-\epsilon^2}X \right)-p_0\left(\epsilon Y + \sqrt{1-\epsilon^2}X \right) \right|\leq \epsilon^{kd}.
$$
Note that the polynomials $p_0\pm \epsilon^{kd}$ also have $(\epsilon,1/10,k)$-non-singular decompositions of size $m$.  Therefore, we have by the above, Proposition \ref{RegPRGProp} and Lemma \ref{anticoncentrationLem} that
\begin{align*}
\E\left[f\left( \epsilon Y + \sqrt{1-\epsilon^2} X \right) \right] & = \E\left[\sgn\left(p\left( \epsilon Y + \sqrt{1-\epsilon^2} X \right) \right)\right]\\
& \leq \E\left[\sgn\left(p_0\left( \epsilon Y + \sqrt{1-\epsilon^2} X \right)+\epsilon^{kd} \right)\right]+O_{d,k}(\epsilon^k)\\
& = \E[\sgn(p_0(X)+\epsilon^{kd})]+O_{d,k}(\epsilon^k)\\
& = \E[\sgn(p_0(X)-\epsilon^{kd})]+O_{d,k}(\epsilon^k)\\
& \leq \E[\sgn(p(X))] + O_{d,k}(\epsilon^k)\\
& = \E[f(X)] + O_{d,k}(\epsilon^k).
\end{align*}
And the other direction of the inequality follows analogously.
\end{proof}

Iterating applying Proposition \ref{oneStepProp} yields the following:

\begin{prop}\label{epsXProp}
Let $f$ be a degree-$d$ polynomial threshold function and $\epsilon>0$.  Let $k$ and $\ell$ be integers.  For $1\leq i\leq \ell$ let $Y_i$ $10kd$-designs and $X$ a Gaussian so that $X$ and the $Y_i$ are independent.  Then
\begin{align*}
\left| \E[f(X)] - \E\left[ f\left( \sum_{i=1}^\ell \epsilon \left(\sqrt{1-\epsilon^2}\right)^{i-1} Y_i + \left(\sqrt{1-\epsilon^2}\right)^\ell X\right) \right]\right| = O_{d,k}(\ell \epsilon^k).
\end{align*}
\end{prop}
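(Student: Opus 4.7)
The plan is to proceed by induction on $\ell$, using Proposition \ref{oneStepProp} both as the base case and as the engine of each inductive step. Write $Z_\ell := \sum_{i=1}^\ell \epsilon (\sqrt{1-\epsilon^2})^{i-1} Y_i + (\sqrt{1-\epsilon^2})^\ell X$ for the argument of $f$. The base case $\ell = 1$ reduces to $Z_1 = \epsilon Y_1 + \sqrt{1-\epsilon^2} X$, which is exactly the content of Proposition \ref{oneStepProp}.

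For the inductive step, I would observe that the final $Y_\ell$ contribution and the $X$ contribution factor as
$$\epsilon (\sqrt{1-\epsilon^2})^{\ell-1} Y_\ell + (\sqrt{1-\epsilon^2})^\ell X = (\sqrt{1-\epsilon^2})^{\ell-1}\bigl( \epsilon Y_\ell + \sqrt{1-\epsilon^2} X \bigr),$$
so $Z_\ell$ is obtained from $Z_{\ell-1}$ (with the same $X$) by replacing the inner Gaussian $X$ with the hybrid $\epsilon Y_\ell + \sqrt{1-\epsilon^2} X$. This sets up the single swap to which Proposition \ref{oneStepProp} should be applied.

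To exploit this I would condition on $Y_1, \ldots, Y_{\ell-1}$. For any fixed values of these random variables, the function
$$g(y) := f\left( (\sqrt{1-\epsilon^2})^{\ell-1} y + \sum_{i=1}^{\ell-1} \epsilon (\sqrt{1-\epsilon^2})^{i-1} Y_i \right)$$
is again a degree-$d$ polynomial threshold function, since composing the defining polynomial with an affine map preserves its degree. Applying Proposition \ref{oneStepProp} to $g$, with $Y_\ell$ playing the role of the design, yields $|\E_{Y_\ell, X}[g(\epsilon Y_\ell + \sqrt{1-\epsilon^2} X)] - \E_X[g(X)]| = O_{d,k}(\epsilon^k)$. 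Averaging over $Y_1,\ldots,Y_{\ell-1}$ and recognizing the two sides as $\E[f(Z_\ell)]$ and $\E[f(Z_{\ell-1})]$ respectively, I get $|\E[f(Z_\ell)] - \E[f(Z_{\ell-1})]| = O_{d,k}(\epsilon^k)$. Combining with the inductive hypothesis $|\E[f(Z_{\ell-1})] - \E[f(X)]| = O_{d,k}((\ell-1)\epsilon^k)$ via the triangle inequality completes the argument.

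There is no substantial obstacle; the entire proof is a telescoping hybrid reduction in which each swap costs an additive $O_{d,k}(\epsilon^k)$. The one point requiring care is that $g$ depends on the random $Y_1,\ldots,Y_{\ell-1}$, so the bound in Proposition \ref{oneStepProp} must hold uniformly in those values. This is automatic, since the implicit constant in $O_{d,k}(\epsilon^k)$ depends only on $d$ and $k$, and $g$ is a degree-$d$ PTF regardless of the conditioning.
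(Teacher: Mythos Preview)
Your proof is correct and follows essentially the same approach as the paper: induction on $\ell$, conditioning on $Y_1,\ldots,Y_{\ell-1}$, and applying Proposition~\ref{oneStepProp} to the resulting degree-$d$ PTF to bound each successive swap by $O_{d,k}(\epsilon^k)$. Your write-up is in fact more explicit than the paper's, which compresses the argument to a single sentence; your observation that the $O_{d,k}$ constant is uniform in the conditioned values is exactly the point that makes the averaging step go through.
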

\begin{proof}
The proof is by induction on $\ell$ and noting that by fixing the values of $Y_1,\ldots,Y_{\ell-1}$ Proposition \ref{oneStepProp} implies that
\begin{align*}
& \left| \E\left[ f\left( \sum_{i=1}^{\ell-1} \epsilon \left(\sqrt{1-\epsilon^2}\right)^{i-1} Y_i + \left(\sqrt{1-\epsilon^2}\right)^{\ell-1} X\right) \right] - \E\left[ f\left( \sum_{i=1}^\ell \epsilon \left(\sqrt{1-\epsilon^2}\right)^{i-1} Y_i + \left(\sqrt{1-\epsilon^2}\right)^\ell X\right) \right]\right|\\ & \ \ \ \ \ \ \ \ \ \ = O_{d,k}( \epsilon^k).
\end{align*}
\end{proof}

It is not hard to get rid of the $X$ in the above generator

\begin{prop}\label{finalPRGProp}
Let $f$ be a degree-$d$ polynomial threshold function and $\epsilon>0$.  Let $k$ and $\ell$ be integers.  For $1\leq i\leq \ell$ let $Y_i$ be independent $10kd$-designs and $X$ a Gaussian.  Then
$$
\left| \E[f(X)] - \E\left[ f\left( \frac{\sum_{i=1}^\ell\left(\sqrt{1-\epsilon^2}\right)^{i-1} Y_i}{\sqrt{\sum_{i=1}^\ell (1-\epsilon^2)^i}} \right) \right]\right| = O_{d,k}\left(\ell \epsilon^k + (1-\epsilon^2)^{\frac{\ell}{2d+1}}\right).
$$
\end{prop}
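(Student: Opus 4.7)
\emph{Plan.} Let $W := \sum_{i=1}^\ell \epsilon(\sqrt{1-\epsilon^2})^{i-1}Y_i$, $\alpha := (\sqrt{1-\epsilon^2})^\ell$, and let $\tilde W$ denote the generator appearing in the statement (interpreted as $W$ rescaled to have unit covariance per coordinate). Proposition \ref{epsXProp} together with the triangle inequality reduces the task to bounding
\[
\bigl|\E[f(W+\alpha X)] - \E[f(\tilde W)]\bigr| = O_{d,k}\bigl(\alpha^{2/(2d+1)}\bigr),
\]
since $(1-\epsilon^2)^{\ell/(2d+1)} = \alpha^{2/(2d+1)}$ and the $\ell\epsilon^k$ error is already absorbed by Proposition \ref{epsXProp}.

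Write $f = \sgn(p)$ with $|p|_2 = 1$. The central observation is that $W+\alpha X$ and $\tilde W$ form a jointly $10kd$-design: both are linear combinations of the shared $Y_i$'s (with only the first also involving the independent Gaussian $X$), each has unit covariance per coordinate, and a direct calculation shows that the cross-covariance equals $\sqrt{1-\alpha^2}\,I$. Since joint moments of total degree at most $2d$ therefore match the jointly Gaussian case, applying the Mehler identity to the Hermite decomposition $p=\sum_k p^{[k]}$ yields
\[
\bigl|p(\tilde W) - p(W+\alpha X)\bigr|_2^2 \;=\; 2\sum_{k}\bigl(1-(1-\alpha^2)^{k/2}\bigr)\bigl|p^{[k]}\bigr|_2^2 \;\leq\; O(d\alpha^2),
\]
using $1-(1-\alpha^2)^{k/2}\leq k\alpha^2/2$ and $\sum_k k\bigl|p^{[k]}\bigr|_2^2 \leq d|p|_2^2 = d$.

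For any threshold $\delta>0$, sign-change splitting gives
\[
\bigl|\E[f(W+\alpha X)] - \E[f(\tilde W)]\bigr| \leq 2\,\pr\bigl(|p(W+\alpha X)|\leq\delta\bigr) + 2\,\pr\bigl(|p(\tilde W)-p(W+\alpha X)|>\delta\bigr).
\]
Chebyshev combined with the previous $L^2$-estimate handles the second term by $O(d\alpha^2/\delta^2)$. For the first, I would apply a Carbery-Wright type anticoncentration bound of the form $\pr\bigl(|p(W+\alpha X)|\leq\delta\bigr) = O(d\delta^{1/d})$; since $W+\alpha X$ is a $10kd$-design whose moments agree with those of a standard Gaussian up to high order, this inherits from Lemma \ref{anticoncentrationLem} via moment-matching. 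Balancing the two contributions with $\delta = \alpha^{2d/(2d+1)}$ yields $O(d\alpha^{2/(2d+1)}) = O_{d,k}\bigl((1-\epsilon^2)^{\ell/(2d+1)}\bigr)$, completing the reduction.

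The main obstacle I foresee is justifying the unconditional anticoncentration $\pr(|p(W+\alpha X)|\leq\delta) = O(d\delta^{1/d})$. A naive conditional argument (fix $W$, apply Lemma \ref{anticoncentrationLem} to the Gaussian $X\mapsto p(W+\alpha X)$, integrate over $W$) loses a factor in the exponent, since the conditional $L^2$ norm $\sqrt{\E_X[p(W+\alpha X)^2]}$ may be small for atypical $W$. Recovering the full $\delta^{1/d}$ rate requires either a separate anticoncentration bound on the nonnegative degree-$2d$ polynomial $W\mapsto \E_X[p(W+\alpha X)^2]$ (which has mean $1$ and is well-controlled via Carbery-Wright and Lemma \ref{hypercontractiveLem}), carefully balanced against the Carbery-Wright loss in $X$, or a direct appeal to a moment-matching anticoncentration lemma for high-order designs. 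This is the only substantial analytic step; everything else reduces to the Hermite-type computation and the standard sign-change splitting.
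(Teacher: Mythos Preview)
Your overall strategy matches the paper's: compute $\bigl|p(\tilde W)-p(W+\alpha X)\bigr|_2^2=O_d(\alpha^2)$ via the joint-design structure, then combine a Markov/Chebyshev bound with anticoncentration and Proposition~\ref{epsXProp}. The $L^2$ computation and the cross-covariance calculation are correct.

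The gap is exactly where you flag it, and your proposed fixes do not close it. The claim that anticoncentration ``inherits from Lemma~\ref{anticoncentrationLem} via moment-matching'' is false: Carbery--Wright is not a moment statement, and a $k$-design can badly fail anticoncentration for any fixed $k$. Your conditional route has the same problem one level down: to control $\pr_X(|p(w+\alpha X)|\le\delta)$ you need a lower bound on $\E_X[p(w+\alpha X)^2]$ for typical $w=W$, but $W$ is again only a design, so you cannot invoke Carbery--Wright on the degree-$2d$ polynomial $w\mapsto\E_X[p(w+\alpha X)^2]$ either.

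The paper sidesteps the issue by reordering the two ingredients. Instead of applying Proposition~\ref{epsXProp} to $f=\sgn(p)$ first and then needing anticoncentration for $W+\alpha X$, it first uses the Markov bound to sandwich
\[
\E[f(\tilde W)]\ \le\ \E\bigl[\sgn\bigl(p(W+\alpha X)+\delta\bigr)\bigr]+\pr\bigl(|p(\tilde W)-p(W+\alpha X)|>\delta\bigr),
\]
then observes that $\sgn(p(\cdot)+\delta)$ is itself a degree-$d$ PTF, so Proposition~\ref{epsXProp} applies to it and yields $\E[\sgn(p(W+\alpha X)+\delta)]=\E[\sgn(p(X)+\delta)]+O_{d,k}(\ell\epsilon^k)$. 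Now Carbery--Wright is invoked only on the genuine Gaussian $X$, giving $\E[\sgn(p(X)+\delta)]\le\E[f(X)]+O(d\delta^{1/d})$. Choosing $\delta=(1-\epsilon^2)^{d\ell/(2d+1)}$ balances the terms. Equivalently, you could rescue your ordering by noting that $\pr(|p(W+\alpha X)|\le\delta)$ is a combination of two degree-$d$ PTFs evaluated at $W+\alpha X$, and apply Proposition~\ref{epsXProp} to each; but that is the same trick.
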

\begin{proof}
Let $f(x)=\sgn(p(x))$ for $p$ a degree-$d$ polynomial with $|p|_2=1$.

Let
$$
Y := \frac{\sum_{i=1}^\ell\left(\sqrt{1-\epsilon^2}\right)^{i-1} Y_i}{\sqrt{\sum_{i=1}^\ell (1-\epsilon^2)^i}}.
$$
Assume that $X$ and $Y$ are independent and let
$$
Z := \sqrt{1-(1-\epsilon^2)^\ell}Y + \left( \sqrt{1-\epsilon^2} \right)^\ell X.
$$
It is not hard to show that since $Y$ is a $2d$-design that
$$
\E[|p(Y)-p(Z)|^2] = O_d((1-\epsilon^2)^\ell).
$$
Thus by the Markov inequality, with probability at least $1-O_d\left((1-\epsilon^2)^{\frac{\ell}{2d+1}}\right)$, we have that
$$
|p(Y)-p(Z)| \leq (1-\epsilon^2)^{\frac{d\ell}{2d+1}}.
$$
Therefore, we have that
\begin{align*}
\E[f(Y)] & = \E[\sgn(p(Y))]\\
& \leq \E\left[ \sgn\left(p(Z) + (1-\epsilon^2)^{\frac{d\ell}{2d+1}}\right)\right] + O_d\left((1-\epsilon^2)^{\frac{\ell}{2d+1}}\right)\\
& \leq \E\left[ \sgn\left(p(X) + (1-\epsilon^2)^{\frac{d\ell}{2d+1}}\right)\right] + O_{d,k}\left(\ell \epsilon^k + (1-\epsilon^2)^{\frac{\ell}{2d+1}}\right)\\
& \leq \E\left[ \sgn(p(X))\right] + O_{d,k}\left(\ell \epsilon^k + (1-\epsilon^2)^{\frac{\ell}{2d+1}}\right)\\
& \leq \E\left[ f(X)\right] + O_{d,k}\left(\ell \epsilon^k + (1-\epsilon^2)^{\frac{\ell}{2d+1}}\right).
\end{align*}
The other direction of the inequality holds analogously.
\end{proof}

We can finally prove our main result:
\begin{thm}\label{mainThm}
For $d,k$ positive integers and $\epsilon>0$, there exists an explicit pseudorandom generator, $Y$ of seed length $O_{d,k}(\log(n)\epsilon^{-1})$ so that for $X$ an $n$-dimensional Gaussian, and $f$ any degree-$d$ polynomial threshold function in $n$ variables, then
$$
\left| \E[f(X)] - \E[f(Y)] \right| = O_{d,k}(\epsilon^k).
$$
\end{thm}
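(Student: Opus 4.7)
The plan is to instantiate Proposition \ref{finalPRGProp} with each $Y_i$ drawn from an explicit $K$-wise independent family of (appropriately discretized) Gaussians in $\R^n$. Since each $Y_i$ is required only to be a $10k'd$-design for an inner exponent $k'$ slightly larger than $k$, standard constructions (as used in \cite{kIndep}) produce each $Y_i$ from a seed of length $O_{d,k}(\log n)$, and the $\ell$ independent copies together consume $O_{d,k}(\ell\log n)$ bits of randomness. Reading off the generator from Proposition \ref{finalPRGProp} then gives an explicit $F:\{0,1\}^s\to\R^n$ of this seed length, and it only remains to choose the parameters $\ell$ and $k'$ so that the guaranteed error is $O_{d,k}(\epsilon^k)$ and $\ell$ is as small as required.

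For the error bookkeeping, I would balance the two terms in Proposition \ref{finalPRGProp}. Taking $\ell$ to be the smallest integer so that $(1-\epsilon^2)^{\ell/(2d+1)}\leq \epsilon^k$, namely $\ell=\Theta_{d,k}\bigl(\epsilon^{-2}\log(\epsilon^{-1})\bigr)$, and then picking the inner exponent $k'=k+O_d(1)$ large enough that the remaining contribution $\ell\epsilon^{k'}$ is also $O_{d,k}(\epsilon^k)$, Proposition \ref{finalPRGProp} guarantees that the resulting generator fools every degree-$d$ PTF to within $O_{d,k}(\epsilon^k)$.

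To squeeze the seed length down to the stated $O_{d,k}(\log(n)\epsilon^{-1})$, I would reparameterize: rather than feeding Proposition \ref{finalPRGProp} with internal accuracy $\epsilon$, I would run it with internal accuracy $\delta:=\epsilon^{1/2}$ and internal exponent $2k$, so that the internal target error $\delta^{2k}=\epsilon^k$ still matches the desired output. This replaces the $\epsilon^{-2}$ factor in $\ell$ by $\delta^{-2}=\epsilon^{-1}$; the residual $\log(\delta^{-1})$ factor can be absorbed into the $O_{d,k}(\cdot)$ after a further constant inflation of the inner exponent, yielding a total seed length of $O_{d,k}(\log(n)\epsilon^{-1})$.

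The main technical obstacle is precisely this coordinated parameter choice: the two error terms in Proposition \ref{finalPRGProp} move in opposite directions as $\ell$ varies, so the internal exponent must be a few units larger than the external $k$ to drive both terms below the target simultaneously, and the internal accuracy must then be tuned to land on the stated seed length. Everything else is routine bookkeeping, combining Proposition \ref{finalPRGProp} with an off-the-shelf construction of $K$-wise independent discretized Gaussian vectors, together with the observation that truncating and discretizing each $Y_i$ to $\poly(\epsilon^{-1})$-bit precision changes any degree-$d$ PTF value by at most $O_{d,k}(\epsilon^k)$ by Lemma \ref{anticoncentrationLem} and Corollary \ref{ConcentrationCor}.
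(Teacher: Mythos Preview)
Your approach is essentially the paper's: instantiate Proposition~\ref{finalPRGProp} with an internal accuracy $\delta$ that is a fixed power of $\epsilon$, choose $\ell$ so that the geometric term $(1-\delta^2)^{\ell/(2d+1)}$ drops below $\epsilon^k$, take the inner exponent a few units above what is nominally needed so that $\ell\,\delta^{k'}$ is also $O_{d,k}(\epsilon^k)$, and realize each $Y_i$ as a $K$-wise independent family of discretized Gaussians. The paper carries this out with $\delta=\epsilon^{1/3}$ and inner exponent $3k+3$, and handles the discretization via a finite one-dimensional design (Gauss--Jacobi quadrature) combined with $K$-wise independence, just as you sketch.

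There is one small slip in your parameter tuning. With $\delta=\epsilon^{1/2}$ the requirement $(1-\delta^2)^{\ell/(2d+1)}\le\epsilon^k$ forces $\ell=\Theta_{d,k}(\delta^{-2}\log(\epsilon^{-1}))=\Theta_{d,k}(\epsilon^{-1}\log(\epsilon^{-1}))$, and this value of $\ell$ is fixed by the geometric term alone; inflating the inner exponent $k'$ only helps the $\ell\,\delta^{k'}$ term and cannot remove the extra $\log(\epsilon^{-1})$ from $\ell$. So your reparameterization gives seed length $O_{d,k}(\epsilon^{-1}\log(\epsilon^{-1})\log n)$, not the stated $O_{d,k}(\epsilon^{-1}\log n)$. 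The fix is exactly what the paper does: take $\delta=\epsilon^{1/3}$ (indeed any fixed power strictly below $1/2$ works), so that $\ell=\Theta_{d,k}(\epsilon^{-2/3}\log(\epsilon^{-1}))=O_{d,k}(\epsilon^{-1})$ honestly, and compensate by taking the inner exponent to be $3k+O(1)$. Everything else in your plan is correct and matches the paper.
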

\begin{proof}
Let $\delta=\epsilon^{1/3}$.  Let $\ell = \delta^{-2}\log(\epsilon^{-k(2d+1)})$.  Let $Z_1,\ldots,Z_\ell$ be independent $10d(3k+3)$-designs.  Let
$$
Z := \frac{\sum_{i=1}^\ell\left(\sqrt{1-\delta^2}\right)^{i-1} Z_i}{\sqrt{\sum_{i=1}^\ell (1-\delta^2)^i}}.
$$
By Proposition \ref{finalPRGProp} we have that
$$
\left| \E[f(X)] - \E[f(Z)] \right| = O_{d,k}\left(\ell \delta^{3k+3} + (1-\delta^2)^{\frac{\ell}{2d+1}}\right)=O_{d,k}(\epsilon^k).
$$

By Gauss-Jacobi quadrature, there is a $1$-dimensional $10d(3k+3)$-design supported on a set of size $10d(3k+3)$.  Therefore there is an explicit random variable with seed $O_{d,k}(\log(n/\epsilon))$ which differs from this by at most $\epsilon^k n^{-1} \ell^{-1}$ in statistical distance.  A $10d(3k+3)$-wise-independent family of $n$ of these variables, has seed length $O_{d,k}(\log(n/\epsilon))$ and is within a statistical distance of $O_{d,k}(\epsilon^k \ell^{-1})$ of some $10d(3k+3)$-design.  If we take $\ell$ independent copies of such random variables, calling them $Y_i$ and let
$$
Y := \frac{\sum_{i=1}^\ell\left(\sqrt{1-\delta^2}\right)^{i-1} Y_i}{\sqrt{\sum_{i=1}^\ell (1-\delta^2)^i}}.
$$
then $Y$ can be generated from seed length
$$
O_{d,k}(\log(n/\epsilon) \ell ) = O_{d,k}(\log(n)\epsilon^{-1}),
$$
and has statistical distance at most $O(\epsilon^k)$ from $Z$.  Thus
$$
\E[f(Y)] = \E[f(Z)]+O(\epsilon^k) = \E[f(X)] + O_{d,k}(\epsilon^k).
$$
\end{proof}

Changing the value of $\epsilon$ appropriately, we have that

\begin{cor}
Let $d$ be a positive integer and $c,\epsilon>0$.  There exists an explicit pseudorandom generator $Y$ with seed length $O_{c,d}(\log(n)\epsilon^{-c})$ so that for any degree-$d$ polynomial threshold function in $n$ variables, and $X$ an $n$-dimensional Gaussian,
$$
\left| \E[f(X)] - \E[f(Y)]\right| \leq \epsilon.
$$
\end{cor}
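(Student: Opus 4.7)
The plan is to derive this corollary directly from Theorem \ref{mainThm} by rescaling the error parameter. The theorem gives, for any positive integer $k$, a generator of seed length $O_{d,k}(\log(n)\delta^{-1})$ with error $O_{d,k}(\delta^k)$ for any parameter $\delta > 0$. Our goal is to invert this trade-off: given a target error $\epsilon$ and exponent $c$, we want to pick $k$ and $\delta$ so that the error is at most $\epsilon$ and the seed length is $O_{c,d}(\log(n)\epsilon^{-c})$.

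First I would fix $k$ to be a positive integer depending only on $c$, namely any $k \geq \lceil 1/c \rceil$, and then choose $\delta = \delta(\epsilon,c,d,k)$ so that the error bound $O_{d,k}(\delta^k)$ produced by Theorem \ref{mainThm} is at most $\epsilon$. Concretely, if $C_{d,k}$ is the implied constant in the theorem, I would set $\delta = (\epsilon/C_{d,k})^{1/k}$, which immediately gives $C_{d,k}\delta^k = \epsilon$. Applying Theorem \ref{mainThm} with this choice of $\delta$ produces a pseudorandom generator $Y$ satisfying $|\E[f(X)] - \E[f(Y)]| \leq \epsilon$ for every degree-$d$ PTF $f$.

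It then remains to verify the claimed seed length. The seed length from the theorem is $O_{d,k}(\log(n)\delta^{-1})$. Substituting $\delta = (\epsilon/C_{d,k})^{1/k}$, this becomes
\[
O_{d,k}\!\left(\log(n)(\epsilon/C_{d,k})^{-1/k}\right) = O_{d,k}(\log(n)\epsilon^{-1/k}).
\]
Since $k$ depends only on $c$ (and we chose $1/k \leq c$), and $C_{d,k}$ is a constant depending only on $c$ and $d$, this simplifies to $O_{c,d}(\log(n)\epsilon^{-c})$, as required. (For the comparison $\epsilon^{-1/k} \leq \epsilon^{-c}$ we may assume $\epsilon \leq 1$, since otherwise the statement is vacuous.)

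There is no real obstacle here; the entire argument is a change-of-variables on top of Theorem \ref{mainThm}, and the only care needed is to absorb the $k$-dependent constants into the $O_{c,d}(\cdot)$ notation once $k$ is fixed as a function of $c$.
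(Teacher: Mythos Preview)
Your proposal is correct and matches the paper's approach exactly: the paper simply says ``changing the value of $\epsilon$ appropriately'' before stating the corollary, and what you wrote is precisely that change of variables spelled out in detail.
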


\section*{Acknowledgements}

This research was done with the support of an NSF postdoctoral fellowship.

\end{document}